\newtheorem{prop}{Proposition}
\newtheorem{defn}{Definition}
\newcommand{\bt}{{\scriptsize \bm{\theta}}}
\title{Bayesian influence diagnostics using normalizing functional Bregman divergence}
\author[1]{Ian M Danilevicz}
\author[2]{Ricardo S Ehlers}
\affil[1]{Department of Statistics, Federal University of Minas Gerais, Brazil}
\affil[2]{Department of Applied Mathematics and Statistics, University of S\~ao Paulo, Brazil}
\date{}
\begin{document}

\thispagestyle{empty}
\maketitle
    
\begin{abstract}

Ideally, any statistical inference should be robust to local
influences. Although there are simple ways to check about leverage
points in independent and linear problems, more complex models require
more sophisticated methods. Kullback-Leiber and Bregman divergences
were already applied in Bayesian inference to measure the isolated
impact of each observation in a model. 
We extend these ideas to models for dependent data and with non-normal
probability distributions such as time series, spatial models and
generalized linear models. We also propose a strategy to rescale the
functional Bregman divergence to lie in the (0,1) interval thus
facilitating interpretation and comparison. This is accomplished with a 
minimal computational effort and maintaining all theoretical properties.
For computational efficiency, we take advantage of Hamiltonian Monte
Carlo methods to draw samples from the posterior distribution of model parameters.
The resulting Markov chains are then directly connected with Bregman
calculus, which results in fast computation. We check the propositions
in both simulated and empirical studies.      
  
\vspace{0.5 cm}
\textbf{Key-words}: Bayesian inference, functional Bregman divergence,
influential observations, Hamiltonian Monte Carlo. 
\end{abstract}

\section{Introduction}

After fitting a statistical model we need to investigate whether the
model assumptions are supported. In particular, inference about
parameters would be weak if it is influenced by a few individual results.
In this paper, we make use of a new diagnostic analysis tool for
detecting influential points. The idea is to
adapt the functional Bregman divergence to compare two or more
likelihoods (\cite{goh}) to the context of measuring how influent is each
observation in a given model. 
An influential point consists of an observation which strongly changes
the estimation of parameters. The classical example is a point which
drastically alters the slope parameter in a linear regression.  In
Bayesian inference, our focus lies into the whole posterior
distribution instead of a single parameter. 
Indeed, seeking for leverage effect in many parameters from a complex
model seems unfeasible.

Bayesian inference should produce a posterior distribution based on
Bayes theorem. So, if
there is a function which measures the distance between two
probability densities we can measure distance between two posterior
distributions or between a Bayesian model and its perturbed
version. The perturbed case could consist of the same sample without
an element if we have identical and independent observations
(\cite{goh}) or it may be a sample with an imputed element if we work
with dependent models (\cite{hao}).  
We can use a well know function such as Kullback-Leiber to measure the
divergence between two posterior distributions as well as
the functional Bregman divergence, which is a generalization of the previous
one. 

In the applications we have in mind, the posterior distributions are not
available in closed form and we resort to Markov chain Monte Carlo (MCMC)
methods to obtain approximations for parameter estimates and detection of
influential observations.
All the necessary computations in this paper were implemented using
the open-source statistical software {\tt R} (\cite{rproject}). In
particular, the {\tt rstan} package which is
an interface to the open-source Bayesian software Stan
(\cite{stanguide}) was used to draw samples from the joint
posterior distributions. Stan is a computing environment for
implementing Hamiltonian Monte Carlo methods (HMC, \cite{neal11})
coupled with the no-U-turn sampler (NUTS) 
which are designed to improve speed, stability and scalability
compared to standard MCMC schemes. Typically, HMC methods result in
high acceptance rates and low serial correlations thus leading to
efficient posterior sampling. 

The remainder of this paper is structured as follows. In Section
\ref{sec:models}, the models used to illustrate the application of our
propositions are briefly reviewed and the associated prior
distributions are described. The Hamiltonian Monte Carlo
sampling scheme is also described here. 
In Section \ref{sec:bregman} we introduce the functional Bregman divergence and
describe its use to detect influential observations in models for both
independent and dependent data.
Section \ref{sec:simulation} consists of simulation studies where we perform
sensitivity analysis and investigate how accurately we can detect influential observations.
Section \ref{sec:empirical} summarizes empirical studies in which we
illustrate our proposed methodology applied in real data. A discussion
in Section \ref{sec:discussion} concludes the paper. 

\section{Models}\label{sec:models}

This section is dedicated to describe the models which we used around
the paper and the HMC sampling scheme adopted.

\subsection{Generalized Linear Models}

Generalized linear models (GLM, \cite{nelder72}) are used here to
illustrate applications of our methods in models for dependent data
with non-normal distributions.
Let $y_1,\dots,y_n$ conditionally independent, where the
distribution of each $y_i$ belongs to the exponential family of distributions, i.e
\begin{equation}
f(y_i|\eta_i) = \exp \left\{ (\eta_i y_i -\psi(\eta_i)) + c(y_i) \right\}, ~i=1,\dots,n.
\label{glm}
\end{equation}  

\noindent
The density in equation (\ref{glm}) is parameterized by the canonical
parameter $\eta_i$ and $\psi(\cdot)$ and $c(\cdot)$ are known
functions. Also, $\bm{\eta}=(\eta_1, \dots,\eta_n)$ is related to 
regression coefficients by a monotone differentiable link function such that $g(\mu_i)=\eta_i$. 
The linear predictor is $\bm{\eta} = X \bm{\beta}$, where $X$ is the
design matrix and $\bm{\beta}=(\beta_1, \dots, \beta_k)$ is a $k$
vector of regression coefficients.  

The likelihood function based on model (\ref{glm}) is given by,
\begin{equation*}
L(\bm{y},\bm{\eta}) = \prod_{i=1}^{n} \exp \left\{ (\eta_i y_i -\psi(\eta_i)) + c(y_i) \right\}.
\end{equation*}

\noindent
This class of models includes several well kown distributions such as
Poisson, Binomial, Gamma, Normal and inverse Normal. 

\subsection{Spatial Regression Models}

We chose to illustrated our methods using
spatial regression models (SRM) as a kind of geostatistical data model (\cite{Gaetan}). 
The model can be represented as,
\begin{equation}
z_i = \beta_0 + \beta_1 x_i + \beta_2 y_i + \beta_3 x_i y_i + \beta_4 x_i^2 + \beta_5 y_i^2 + \varepsilon_i,
\label{srm_model}
\end{equation}

\noindent
where $z_i$ is the response of the observation $i$, $x_i$ is the value
of $z_i$ at the coordinate x, $y_i$ is the value of $z_i$ at the
coordinate y and $\varepsilon_i$ is an error, usually assumed
$N(0,1)$. Most commonly, x and y are latitude and longitude however
they could express as angles.
If we assume normality of the errors the likelihood function can be expressed as follows,
\begin{equation*}
L(\bm{\theta}, \bm{z}) = 
(2\pi)^{-n/2}
|\Sigma|^{1/2}\exp\left\{-\dfrac{1}{2}(\bm{z}-X\bm{\beta})'\Sigma(\bm{y}-X
\bm{\beta})\right\}, 
\label{like_srm}
\end{equation*}

\noindent
where $X$ is a design matrix with the following  columns: ones,
coordinate x, coordinate y, interaction of x and y, squared x and
squared y. The matrix $\Sigma$ describes  the covariance structure between the
observations and $\bm{y}$ in the response vector. 
A suitable set of priors consists of assuming that $\bm{\beta}\sim N(0,\eta I_6)$ and
the variance-covariance matrix follows a inverse Wishart
$\Sigma \sim IW(V,k)$.

A particular case of SRM consists of assuming an independence variance
structure, i.e. $\Sigma = (\sigma_1^2, \dots, \sigma_n^2)'I_n$, where
$I_n$ is an identity matrix. In this case default prior distributions for the $\sigma_i^2$
could be Inverse Gamma, or Gamma distributions if we are not restricted to conjugate priors.

\subsection{GARCH Model}

The generalized autoregressive conditional heteroscedasticity (GARCH)
model (\cite{boller86}) is the most used class of models to study the
volatility in financial markets.
The GARCH($p,q$) model is typically presented as the following sequence of equations,
\begin{eqnarray*}
y_t &=&  \sigma_t \epsilon_t , \; \epsilon_t \sim N(0, 1),\\
\sigma^2_t &=& \alpha_0 + \sum_{i=1}^{p} \alpha_i y_{t-i}^2 + \sum_{j=1}^{q} \beta_j \sigma^2_{t-j},
\end{eqnarray*}

\noindent
where $y_t$ is the observed return at time $t$ and
$\alpha_i$ and $\beta_j$ are unknown parameters. The $\epsilon_t$ are independent and identically
distributed error terms with mean zero and variance one.
Also, $\alpha_0>0$, $\alpha_i\ge 0, i=1,\dots,p$
and $\beta_j\ge 0, j=1,\dots,q$ define the positivity constraints and 
$\sum_{i=1}^p\alpha_i+\sum_{j=1}^q\beta_j < 1$, 
ensures covariance stationarity of $\sigma_t^2$.

Given an observed time series of returns $\bm{y}=\{y_1,\dots,y_n\}$
the conditional likelihood function is given by, 
\begin{equation}
L(\bm{\theta}) =
\prod_{t=s+1}^{n} \dfrac{1}{\sqrt{2\pi}\sigma_t}\exp\left\{ -\dfrac{y^2_t}{2 \sigma^2_t} \right\},  
\end{equation}

\noindent
where $s=\max(p,q)$ and $\bm{\theta}$ represents the set of all model
parameters. In practice, to get this recursive definition of the
volatility off the round in {\tt Stan} we need to impute non-negative
initial values for $\sigma$.

Prior distributions for the GARCH parameters were proposed by
\cite{prior_deschamps} and also used in \cite{prior_ardia}, who 
suggest a multivariate Normal distribution for $\bm{\alpha}$ and
$\bm{\beta}$ truncated to satisfy the associated constraints. However,
to avoid truncation we propose a simpler approach and specify the
following priors,
$\alpha_0\sim\mbox{Gamma}(a_0,b_0)$, $\alpha_i\sim\mbox{Beta}(c_i,
d_i)$ and $\beta_j\sim\mbox{Beta}(e_j,f_j)$ for $i\in \{1, \dots, p\}$
and $j \in \{1, \dots, q\}$ respectively.

\subsection{Hamiltonian Monte Carlo}

Our approach to detect influential observations relies on MCMC methods
that should produce Markov chains which efficiently explore the parameter
space. This motivates seeking for sampling strategies that aim at
reducing correlation within the chains thus improving convergence to
the posterior distribution.
Hamiltonian Monte Carlo (HMC) comes as a recent and powerful
simulation technique when all the parameters of interest are
continuous. 
HMC uses the gradient of the log posterior density to guide the
proposed jumps in the parameter space and reduces the random walk
effect in the traditional Metropolis-Hastings algorithm (\cite{Duane}
and \cite{neal11}). 

For $\bm{\theta}\in\mathbb{R}^d$ a $d$-dimensional vector of parameters
and $\pi(\bm{\theta})$ denoting the posterior density of $\bm{\theta}$,
the idea is to augment the parameter space whereas the invariant
distribution is now a Hamiltonian density given by,
\begin{equation*}
p(H(\bm{\theta},\bm{\varphi})) = \dfrac{1}{c} \exp(-H(\bm{\theta},\bm{\varphi})),
\end{equation*}
for a normalizing constant $c$. The Hamiltonian function is decomposed as,
$H(\bm{\theta},\bm{\varphi}) = U(\bm{\theta}) + K(\bm{\varphi})$,
where $U(\bm{\theta})$ is the potential energy,
$\bm{\theta}\in\mathbb{R}^d$ is the position vector, $K(\bm{\varphi})=\bm{\varphi}'V^{-1}\bm{\varphi}$
is the kinetic energy and
$\bm{\varphi}\in\mathbb{R}^d$ is the momentum vector in the physics literature. 
In a Bayesian setup we set $U(\bm{\theta})=-\log\pi(\bm{\theta})$.

Trajectories between points $(\bm{\theta},\bm{\varphi})$ are defined
theoretically by some differential equations which in practice cannot
be solved analytically. So, in terms of simulation a method is
required to approximately integrate the Hamiltonian dynamics. The
leapfrog operator (\cite{leimr04}) is typically used to discretize the
Hamiltonian dynamics and it updates $(\bm{\theta},\bm{\varphi})$ at time
$t+\epsilon$ as the following steps,
\begin{subequations}
\begin{align*}
  \bm{\varphi}^{(t)} &\sim N_d(0,V) \\
  \bm{\varphi}^{(t+\epsilon/2)}&=\bm{\varphi}^{(t)}-\dfrac{\epsilon}{2}\nabla_{\bt} U(\bm{\theta}^{(t)})\\
  \bm{\theta}^{(t+\epsilon)}  &=\bm{\theta}^{(t)}+\epsilon V^{-1}\bm{\varphi}^{(t+\epsilon/2)}\\
  \bm{\varphi}^{(t+\epsilon)} &=\bm{\varphi}^{(t+\epsilon/2)}-\dfrac{\epsilon}{2}\nabla_{\bt} U(\bm{\theta}^{(t+\epsilon)}),
  \end{align*}
\end{subequations}

\noindent
where $\epsilon>0$ is a  user specified small step-size and 
$\nabla_{\bt}U(\bm{\theta})$ is the gradient of
$U(\bm{\theta})$ with respect to $\bm{\theta}$. Then, after a given
number $L$ of time steps this results in a proposal
$(\bm{\theta}^*,\bm{\varphi}^*)$ and a Metropolis
acceptance probability is employed to correct the bias introduced by
the discretization and ensure convergence to the
invariant posterior distribution.

So, using Hamiltonian Monte Carlo involves specifying the number of
leapfrogs $L$ by iteration, the step-size length $\epsilon$ and the
initial distribution of the auxiliary variable $\bm{\varphi}$.  
The choice of an appropriated $L$ which associated with $\epsilon$
will not produce a constant periodicity may be done using the
No-U-Turn sampler (NUTS, \cite{hoffman14}), which aims at
avoiding the need to hand-tune $L$ and
$\epsilon$ in practice. During the warmup the algorithm will test different values
of leapfrogs and step-size and automatically judges the best range to sample. 
The basic strategy is to double $L$ until increasing the leapfrog will
no longer enlarge the distance between an initial value of $\bm{\theta}$ and a
proposed value $\bm{\theta}^*$. The criterion is the derivative with
respect to time of half the squared distance between the $\bm{\theta}$ and
$\bm{\theta}^*$. 
To define an efficient value of $\epsilon$, NUTS constantly checks if
the acceptance rate is sufficiently high during the warmup. If it is
not, the algorithm just shortens the step-size at next iteration (see
\cite{nesterov} and \cite{hoffman14}).

It is also worth noting that the Stan programming language provides a
numerical gradient using reverse-mode algorithmic differentiation so that
obtaining the gradient analytically is not necessary.
Finally, the distribution of $\bm{\varphi}$ is a multivariate
normal with either a diagonal or a full variance-covariance
matrix. The former is usually selected because the precision increase is
almost irrelevant compared to the computational memory costs (\cite{stanguide}). 
  
\section{Functional Bregman divergence}\label{sec:bregman}

The functional Bregman divergence aims at measuring dissimilarities
between functions, and in particular we are interested in comparing
posterior distributions. The method is briefly described here and
adapted to our models for detection of influential observations.
We define $(\Omega, X, \nu)$ as a finite measure space and $f_1(x)$
and $f_2(x)$ as two non-negative functions.  

\begin{defn}
Let us consider $\psi:(0,\infty) \rightarrow  \mathbb{R}$ being a
strictly convex and differentiable function on $\mathbb{R}$. Then 
the functional Bregman divergence $D_\psi$ is defined under the marginal density $\nu(x)$ as
\begin{equation}
D_\psi (f_1,f_2) = \int \psi(f_1(x)) - \psi(f_2(x)) - \psi'(f_2(x)) [f_1(x)-f_2(x)]   d\nu(x),
\label{bregman1}
\end{equation} 
where $\psi'$ represents the derivative of $\psi$.
\label{bregman2}
\end{defn}

\noindent This divergence has some well-known properties (see for
example \cite{goh}), the proofs of which
appear in \cite{frigyik,frigyik_B}.

Clearly, if $\psi(f(x))=f(x)$ $\forall f(x)$ then
$\psi'(f(x))= 1$ $\forall f(x)$ and the functional Bregman divergence is
zero for any $f_1(x)$ and $f_2(x)$. However, if we choose a strictly
convex $\psi$ then the Bregman divergence will be always
greater than zero, except for the trivial case
$f_1(x)=f_2(x)$. Furthermore, $\psi$ works as a tuning parameter and
increasing its distance from the identity we would have
$D_\psi(f_1,f_2)$ as large as desired no matter the functions $f_1(x)$
and $f_2(x)$. 
In this paper, we follow the suggestion in \cite{goh} and restrict to
the class of convex functions defined by \cite{eguchi}, 
\begin{align}
\psi_\alpha(x) = \left\{  \begin{array}{ll} 
 x\log x -x +1, & \alpha=1 \\
-\log x + x - 1, & \alpha = 0 \\
(x^{\alpha}-\alpha x + \alpha -1 ) / (\alpha^2 -\alpha ), & \textrm{otherwise.} 
\end{array} \right.
\label{xlogx}
\end{align}

\noindent Three popular choices of $\alpha$ are: $\alpha = 0$
(Itakura-Saito distance), $\alpha = 1$ (Kullback-Leibler divergence),
and $\alpha = 2$ (squared Euclidean distance or $L^2/2$). 

\subsection{Perturbation in dependent models}

Here we extend the ideas in \cite{goh} where perturbation was defined
in models for independent and identically distributed
observations to dependent models. A general perturbation is defined as
the ratio of unnormalized posterior densities, 
\begin{equation}
\delta(\bm{\theta}, \bm{y}, X) = 
\dfrac{f_\delta(\bm{y}|\bm{\theta},X)\pi_\delta(\bm{\theta})}{f(\bm{y}|\bm{\theta},X)\pi(\bm{\theta})},
\label{perturbation}
\end{equation}

\noindent
where $\delta$ indicates that likelihood and/or prior suffers some perturbation. 
Particularly, to assess potential influence of any observation the
perturbation is restricted to the likelihood function while keeping
the prior unaltered. The associated perturbation is then given by,
\begin{equation*}
\delta(\bm{\theta}, \bm{y}, X) =
\dfrac{f(\bm{y}_{(i)}|\bm{\theta},X)}{f(\bm{y}|\bm{\theta},X)},
\label{plike}
\end{equation*}

\noindent
where $\bm{y}_{(i)}$ denotes the vector $\bm{y}$ without the $i$th
case. In models for dependent data however we can not exclude an
observation without modifying the likelihood structure. 
In any case, the general rule to measure the local influence of the
$i$th point is to compute the divergence between
$f(\bm{y}|\bm{\theta},X)$ and $f(\bm{y}_{(i)}|\bm{\theta},X)$, 
\begin{equation*}
 d_{\psi,i}=D_\psi(f(\bm{y}_{(i)}|\bm{\theta},X),f(\bm{y}|\bm{\theta},X)).
 \label{di}  
\end{equation*}

The integral in Equation (\ref{bregman1}) however is analytically
intractable for most practical situations and an approximation is
needed. It is convenient to define the normalizing constant for $p(\bm{\theta}|\bm{y})$ as,
\begin{equation*}
m^{-1}(\bm{y}) = \int \dfrac{\omega(\bm{\theta})}{f(\bm{y}|\bm{\theta})p(\bm{\theta)}}p(\bm{\theta}|\bm{y})d\bm{\theta},
\label{constant}
\end{equation*}

\noindent
where $m(\bm{y})$ is the marginal density $\int f(\bm{\theta}|\bm{y})
p(\bm{\theta})$ and $\omega(.)$ is any probability density
function. So, given a sample $\{\bm{\theta}^s\}_{s=1}^S$ from
the posterior distribution (which could be generated by HMC) we can
estimate the normalizing constant as,
\begin{equation*}
\tilde{m}^{IW}(\bm{y}) = \left[ \dfrac{1}{S} \sum_{s=1}^{S}
  \dfrac{\omega(\bm{\theta}^s)}{f(\bm{y}|\bm{\theta}^s)p(\bm{\theta}^s)}
  \right]^{-1}. 
\label{mIW}
\end{equation*}

\noindent This is the so called Importance-Weighted Marginal Density
Estimate (IWMDE, \cite{chen94}). Denoting the resulting posterior
distribution as $\tilde{p}^{IW}(\bm{\theta}|\bm{y})$, the approximate
perturbed posterior is given by,
\begin{equation}
\tilde{p}_{\delta}^{IW}(\bm{\theta}|\bm{y}) = 
\dfrac{\tilde{p}^{IW}(\bm{\theta}|\bm{y}) \delta(\bm{\theta},\bm{y},X)}{\dfrac{1}{S}\sum_{s=1}^{S} \delta(\bm{\theta}^s,\bm{y},X)},
\end{equation}

\noindent
where $\tilde{p}^{IW}(\bm{\theta}|\bm{y}) =
f(\bm{y}|\bm{\theta})\pi(\bm{\theta})/\tilde{m}^{IW}(\bm{y})$. Consequently,
we can approximate the functional Bregman divergence between
$p(\bm{\theta}|\bm{y})$ and $p_{\delta}(\bm{\theta}|\bm{y})$ by, 
\begin{equation*}
\hat{D}_{\psi}^{IW} = \dfrac{1}{S} \sum_{s=1}^{S} \left\{ \dfrac{\psi(\tilde{p}^{IW}(\bm{\theta}^s|\bm{y}))-\psi(\tilde{p}_{\delta}^{IW}(\bm{\theta}^s|\bm{y}))
-(\tilde{p}^{IW}(\bm{\theta}^s|\bm{y}) -\tilde{p}_{\delta}^{IW}(\bm{\theta}^s|\bm{y})\psi'(\tilde{p}_{\delta}^{IW}(\bm{\theta}^s|\bm{y}))
}{\tilde{p}^{IW}(\bm{\theta}^s|\bm{y})} \right\},
\label{DIW}
\end{equation*}

\noindent
which for the convex functions in (\ref{xlogx}) is simplified as, 
\begin{equation*}
\hat{D}_{\psi_\alpha}^{IW} = \dfrac{1}{S} \sum_{s=1}^{S} \left\{
\dfrac{1-\alpha \{\delta(\bm{\theta}^s, \bm{y}, X)/\bar{\delta}\}^{\alpha-1} 
+(\alpha-1)\{\delta(\bm{\theta}^s, \bm{y}, X)/\bar{\delta}\}^{\alpha}} 
{\alpha (\alpha - 1) \{\tilde{p}^{IW}(\bm{\theta}^s|\bm{y})\}^{1-\alpha}} \right\},
\end{equation*}

\noindent
where $\bar{\delta}  = \dfrac{1}{S}\sum_{s=1}^{S}
\delta(\bm{\theta}^s,\bm{y},X)$. In particular, for $\alpha=1$ which
corresponds to the Kullback-Leibler divergence, we can simplify many
terms of the above expression and obtain,
\begin{equation}
\hat{D}_{\psi}^{IW} = \dfrac{1}{S} \sum_{s=1}^{S} \left\{
-\log \left( \dfrac{\delta(\bm{\theta}^s,\bm{y},X)}{\bar{\delta}} \right)
 \right\}.
\label{KL}
\end{equation}
 
\subsection{Normalizing Bregman divergence}

When using a functional Bregman divergence to evaluate influential
points each $d_{\psi,i}\in\mathbb{R}^+$ and in this scale we might have
doubts about one or more values being substantially higher than the others.
To facilitate comparison, \cite{mcculloch}
proposed a calibration which compresses the scale
between 0.5 and 1 by making an analogy with the comparison between two Bernoulli
distributions one of which with success probabilities equal to 1/2.
However, extending this idea to any functional Bregman divergence and
comparing any probability distribution with a Bernoulli sounds
difficult to justify theoretically.
Therefore, we propose a different route to compare the Bregman
divergence between two densities, which we call a normalizing Bregman divergence. 

\begin{prop}
  Given n+1 probability functions $f_0, \dots, f_n$ we have n
  divergences between $f_0$ and $f_1,\dots,f_n$, which we write as
  $D_\psi(f_0,f_1), \dots, D_\psi(f_0,f_n)$. Then, there is a function
  $||\psi||$ for which the sum of the $n$ divergences
  $D_{||\psi||}(f_0,f_1), \dots, D_{||\psi||}(f_0,f_n)$ returns one, 
  \begin{equation*}
    \sum_{i=1}^{n} D_{||\psi||}(f_0,f_i) = 1, \;
    \forall i \in \{1,\dots,n\},
  \end{equation*}
  and $D_{||\psi||}(\cdot)$ is called a normalizing Bregman divergence.
  \label{norm_berg}
\end{prop}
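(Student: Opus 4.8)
The plan is to exploit the fact that the functional Bregman divergence of Definition \ref{bregman2} is positively homogeneous in its generating function. The first step is to record the scaling identity: for any constant $c>0$, the function $c\psi$ is again strictly convex and differentiable on $(0,\infty)$, and since $(c\psi)'=c\psi'$ and the integral in (\ref{bregman1}) is linear,
\begin{equation*}
D_{c\psi}(f_1,f_2)=\int \left( c\,\psi(f_1(x)) - c\,\psi(f_2(x)) - c\,\psi'(f_2(x))\,[f_1(x)-f_2(x)] \right)\,d\nu(x) = c\,D_\psi(f_1,f_2).
\end{equation*}
This homogeneity in $\psi$ is essentially the only property of $D_\psi$ the argument uses.

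Next I would set $S_\psi := \sum_{i=1}^{n} D_\psi(f_0,f_i)$ and check that $S_\psi$ is a finite, strictly positive real number. Finiteness holds whenever each $D_\psi(f_0,f_i)$ is finite, which we take for granted since these are exactly the quantities being computed. Strict positivity follows from the properties of the Bregman divergence quoted just after Definition \ref{bregman2}: because $\psi$ is strictly convex, $D_\psi(f_0,f_i)\ge 0$ with equality if and only if $f_i=f_0$ ($\nu$-almost everywhere); hence $S_\psi>0$ as soon as at least one $f_i$ differs from $f_0$. (The degenerate case $f_1=\cdots=f_n=f_0$ is vacuous: all divergences vanish and no normalization is needed.)

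Then I would define $\|\psi\| := S_\psi^{-1}\,\psi$. Since $S_\psi^{-1}>0$, the function $\|\psi\|$ is strictly convex and differentiable on $(0,\infty)$, hence an admissible generator in the sense of Definition \ref{bregman2}. Applying the scaling identity with $c = S_\psi^{-1}$ yields
\begin{equation*}
\sum_{i=1}^{n} D_{\|\psi\|}(f_0,f_i) = S_\psi^{-1}\sum_{i=1}^{n} D_\psi(f_0,f_i) = S_\psi^{-1}\,S_\psi = 1,
\end{equation*}
which is the assertion; moreover each normalized term equals $D_\psi(f_0,f_i)/S_\psi$, so it lies in $(0,1)$ whenever $n\ge 2$ and the $f_i$ are not all equal to $f_0$, matching the interpretive goal stated in the abstract.

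I do not anticipate a genuine obstacle: the whole proof is the homogeneity of $D_\psi$ in $\psi$ together with well-definedness of the normalizing constant $S_\psi$. The only points needing care are (i) confirming that $\|\psi\|$ remains in the admissible class, which holds because positive rescaling preserves both strict convexity and differentiability --- though I would note that $\|\psi\|$ need not stay in the specific Eguchi family (\ref{xlogx}), only in the broader class of Definition \ref{bregman2}; and (ii) excluding, by hypothesis, the degenerate all-equal configuration and any case in which some $D_\psi(f_0,f_i)$ is infinite. I would also remark that the construction is data-dependent, since $\|\psi\|$ depends on the observed $f_0,\dots,f_n$, which is precisely what makes the rescaled divergences directly comparable and forces them to sum to one.
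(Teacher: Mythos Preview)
Your argument is correct and is in fact cleaner than the paper's. The paper argues only at a heuristic level: it fixes a ``sequence of functions $\psi_m$'' that tunes the divergence intensity, observes that the total $k_0=\sum_{i=1}^n D_\psi(f_0,f_i)$ can be made $0$ (when $\psi$ is affine) or arbitrarily large (as $\psi$ becomes more convex), and concludes that ``in particular $k_0$ may be one.'' This is essentially an implicit intermediate-value claim with no explicit construction of $\|\psi\|$ and no identification of the family over which one is varying.

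Your route is different and more direct: you exploit the positive homogeneity $D_{c\psi}=c\,D_\psi$ and simply set $\|\psi\|=S_\psi^{-1}\psi$. This is fully constructive, verifies that $\|\psi\|$ remains an admissible generator in the sense of Definition~\ref{bregman2}, and as you note it immediately yields $D_{\|\psi\|}(f_0,f_q)=D_\psi(f_0,f_q)/S_\psi$, which is exactly the operator $\mathcal{B}$ of Proposition~\ref{norm_berg_operator}; so your proof of Proposition~\ref{norm_berg} also gives Proposition~\ref{norm_berg_operator} for free, whereas the paper treats the two separately. Your caveats --- that $\|\psi\|$ need not remain in the Eguchi family~(\ref{xlogx}), that the degenerate all-equal case must be excluded, and that the construction is data-dependent --- are well taken and not addressed explicitly in the paper's own argument.
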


\begin{proof}
There is a sequence of functions $\psi_m: (f_1,f_2) \rightarrow
\mathbb{R}^+, \; m \in \mathbb{N}$, which tunes the divergence
intensity between any two density functions $f_1$ and $f_2$.   	
Suppose we have a full probability density $f_0$ and we wish to
compare it with each likelihood without $i$th element as $f_1, \dots,
f_n$ to check local influence. We already know that each divergence is
positive, so the sum of $n$ divergences belongs to positive real
domain, 
\begin{equation*}
\sum\limits_{i=1}^{n} D_{\psi}(f_0,f_i) = k_0, \; k_0 \in \mathbb{R}^+,
\label{berg_sum}
\end{equation*}
\noindent
where $k_0=0$ if and only if $\psi$ is the identity, but also could be
arbitrarily high as $\psi$ becomes more and more convex. In particular $k_0$ may be one. 
\end{proof}

\begin{prop}
Given n+1 probability functions $f_0, \dots, f_n$ we have n divergences between $f_0$ and $f_1,\dots,f_n$, which we write as $D_\psi(f_0,f_1), \dots, D_\psi(f_0,f_n)$. There is a mean operator $\mathcal{B}$ which transforms any Bregman divergence in a normalizing Bregman divergence.
\begin{equation}
D_{||\psi||}(f_0,f_q) = \mathcal{B}(D_{\psi}(f_0,f_q)) = \dfrac{D_{\psi}(f_0,f_q)}{\sum_{i=1}^{n} D_{\psi}(f_0,f_i)}, \;
\forall i,q \in \{1,\dots,n\},
\end{equation}
where $\mathcal{B}(\cdot)$ is called a normalizing Bregman operator.
\label{norm_berg_operator}
\end{prop}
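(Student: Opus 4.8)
The plan is to exhibit the convex function $||\psi||$ explicitly and then verify the two things the statement requires: that $\mathcal{B}$ turns $D_\psi$ into a genuine functional Bregman divergence, and that the resulting divergences sum to one in the sense of Proposition \ref{norm_berg}. The key structural fact is that the functional Bregman divergence of Equation (\ref{bregman1}) is positively homogeneous in its generating function: for any constant $a>0$, both $\psi$ and $\psi'$ enter (\ref{bregman1}) linearly, so the constant factors out of the integral and $D_{a\psi}(f_1,f_2)=a\,D_\psi(f_1,f_2)$ for all admissible $f_1,f_2$; moreover $a\psi$ is still strictly convex and differentiable on $(0,\infty)$, hence an admissible generator in the sense of Definition \ref{bregman2}.

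First I would set $k_0 := \sum_{i=1}^{n} D_\psi(f_0,f_i)$, which by the positivity of the Bregman divergence (recalled after Definition \ref{bregman2}) lies in $\mathbb{R}^+$ and is strictly positive whenever $\psi$ is strictly convex and at least one $f_i$ differs from $f_0$ on a set of positive $\nu$-measure --- the only case worth normalizing, since otherwise there is nothing to compare. I would then take $||\psi|| := \psi/k_0$. By the homogeneity property, $D_{||\psi||}(f_0,f_q) = D_\psi(f_0,f_q)/k_0 = \mathcal{B}(D_\psi(f_0,f_q))$ for every $q\in\{1,\dots,n\}$, which is exactly the asserted identity; note that the same $||\psi||$ works uniformly in $q$, so a single convex function generates the whole normalized family, even though the operator $\mathcal{B}$ itself depends on the full collection $f_1,\dots,f_n$ through $k_0$.

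Second I would sum over $q$: $\sum_{q=1}^{n} D_{||\psi||}(f_0,f_q) = \sum_{q=1}^{n} D_\psi(f_0,f_q)/k_0 = k_0/k_0 = 1$, so $D_{||\psi||}$ is a normalizing Bregman divergence in the sense of Proposition \ref{norm_berg} and $\mathcal{B}$ has the advertised effect. I would close by recording two interpretive consequences: each $D_{||\psi||}(f_0,f_q)$, being one summand of a family of nonnegative numbers adding to $k_0$ divided by $k_0$, lies in $[0,1]$, which is the comparability the section is after; and $\mathcal{B}$ is idempotent --- applying it to an already-normalized family returns the same family --- which justifies calling it a normalizing operator.

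The main obstacle, such as it is, is the homogeneity step: one must check with a little care that scaling $\psi$ by a positive constant preserves strict convexity and differentiability (so that $||\psi||$ is a legitimate choice in Definition \ref{bregman2}) and that the constant genuinely factors out of the integral in (\ref{bregman1}); both are immediate once spelled out. The only genuine hypothesis needed beyond the statement as written is the non-degeneracy $k_0>0$, i.e.\ that not all the $f_i$ coincide with $f_0$; everything else is bookkeeping.
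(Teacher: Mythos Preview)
Your proof is correct, and considerably more direct than the paper's. You establish the result by exploiting the positive homogeneity of the functional Bregman divergence in its generator: since $D_{a\psi}=a\,D_\psi$ for any $a>0$, you simply take $||\psi||=\psi/k_0$ with $k_0=\sum_i D_\psi(f_0,f_i)$, and the normalization follows by arithmetic. This is fully constructive---you exhibit the convex function $||\psi||$ explicitly---and the only non-trivial step (homogeneity) is immediate from the linearity of the integral in~(\ref{bregman1}).

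The paper takes a different route. Rather than constructing $||\psi||$ directly, it argues heuristically that the generalized Pythagorean inequality of \cite{frigyik_B} implies \emph{order preservation} across generators: if $D_{\psi^*}(f_1,f_2)>D_{\psi^*}(f_1,f_3)$ then the same inequality holds for any other strictly convex $\psi^{**}$. From this the paper infers that all Bregman divergences are ``the same divergence just with a different location scale,'' which then legitimizes dividing through by the sum. Your argument sidesteps this entirely: you do not need order preservation across \emph{arbitrary} convex generators, only the trivial fact that rescaling $\psi$ rescales $D_\psi$ by the same factor. What you gain is rigor and an explicit $||\psi||$; what the paper's framing gains, if the order-preservation claim is granted, is a conceptual explanation for why the normalized quantity is intrinsic rather than an artifact of the particular $\psi$ chosen at the outset.
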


\begin{proof}
By the generalized Pythagorean inequality (\cite{frigyik_B}), it is natural to supposed order maintenance as
$D_{\psi^*}(f_1,f_2) > D_{\psi^*}(f_1,f_3) \; \implies
D_{\psi^{**}}(f_1,f_2) > D_{\psi^{**}}(f_1,f_3)$ for
any $\psi^{*}$ and $\psi^{**}$ under the restriction of strictly convexity. 
If the above order relation is maintained then we can guarantee that all Bregman divergence with any $\psi$ consists of the same divergence just with a different location scale.  
\end{proof}

We gather these two arguments together as follows. A finite sum of divergences is
finite and $\psi$ just tunes the scale but not the order of
Bregman. So there is a special case of $\psi$, let us call it 
$||\psi||$, for which the sum with respect to a set of $n$ densities $f_i$ results in
one, and we call this divergence a normalizing Bregman
divergence. This is so because  all
Bregman divergence preserves the same order.

So, the attractiveness of our proposal is that
$0 \leq D_{||\psi||}(f_0,f_q) \leq 1, \; \forall
q \in \{1, \dots, n\}$ and it is quite intuitive to work in this scale
to compare divergences in the context of identifying influential observations.
Also, one possible caveat is that a result being high or low would depend on
the sample size so that any cut-off point should take $n$ into account.
In this paper we argue that uUnder the null hypothesis that there is
no influential observation in the sample, a reasonable expected
normalizing Bregman would be $1/n$, i.e.
\begin{equation*}
E(D_{||\psi||}(f_0,f_i)) = \dfrac{1}{n}, ~i=1,\dots,n,
\end{equation*}

\noindent 
so that we expected each observation would present the same divergence.
This bound becomes our starting point to identify influential observations.
If any observation returns $D_{||\psi||}(f_0,f_i)>1/n$, then it is a
natural candidate to be an influential point, which we must investigate. 
This should be seen as a useful practical device to seek for influence
rather than a definitive theoretical constant which can separate
influential from not influential cases. Finding a better cut-off point
other than $1/n$ is still as a open problem for future research.
Finally, we note that using the Kullback-Leibler divergence which is approximated
using Equation (\ref{KL}) leads to faster computations.

\section{Simulation Study}\label{sec:simulation}

In this section, we assess the performance of the algorithms and
methods proposed by conducting a simulation study. In particular, we
verify whether reliable results are produced and which parameters are
the most difficult to estimate. We also check sensitivity to prior
specification and the performance for detecting influential
observations. We concentrate on the performance of posterior
expectations as parameter estimators using Hamiltonian Monte Carlo
methods and the Stan package. For all combinations of models and prior
distributions we generate $m=1000$ replications of data and the
performances were evaluated considering the bias and the square root
of the mean square error (SMSE), which are defined as,
\begin{align*}
\textrm{Bias} = \frac{1}{m}\sum_{i=1}^{m}\hat{\theta}^{(i)} - \theta, \qquad
\textrm{SMSE} = \sqrt{\frac{1}{m}\sum_{i=1}^{m}(\hat{\theta}^{(i)} - \theta)^2}
\end{align*}
where $\hat{\theta}^{(i)}$ denotes the point estimate of a parameter $\theta$
in the $i$th replication, $i=1,\ldots,m$. 
Finally, for each data set we generated two chains of 4000 iterations
using Stan and discarded the first 2000 iterations as burn-in.

\subsection{Performance for Estimation and Sensitivity Analysis}

We begin with a logistic regression with an intercept and two
covariates and simulate data using two parametric sets. The values of
the two covariates  $x_1$ and $x_2$ were generated independently
from a standard normal distribution. The first model (Model 1) has true
parameters given by $\beta_{0}=1.3$, $\beta_{1}=-0.7$, $\beta_{2}=0.3$ while for the
second one (Model 2) the true parameters were set to $\beta_{0}=-1.6$,
$\beta_{1}=1.1$, $\beta_{2}=-0.4$ and each model was tested for two
different sample sizes, $n=100$ and $n=300$. Finally, inspired by
\cite{prior_gelman}, we adopted three different prior distributions
for the coefficients $\beta_j$, $j=0,1,2$ as follows. 
Prior 1: $\beta_{j}\sim N(0,10^2)$, Prior 2: $\beta_{j}\sim\text{Cauchy}(0,10)$, 
Prior 3: $\beta_{j}\sim\text{Cauchy}(0,2.5)$.
The main results of this exercise
are summarized in Table \ref{sa_lr}. Model 1 with $n=100$ and Cauchy prior
presents less bias for most estimations, except for $\beta_2$. This
prior also leads to lowest SMSE for all parameters. When
we  observed the same set but with $n=300$, all estimations get better
and again the Cauchy prior provides the least biased estimation. The
mean SMSE falls from
around 0.100 to approximately 0.025 when the sample size incrases. For 
Model 2, the results are quite similar.

\begin{center} [ Table \ref{sa_lr} around here ] \end{center}

We now turn to the analysis of the spatial regression model given in
(\ref{srm_model}). Data from two models with an intercept and four
covariates were generated where the true coefficients are given by
$\bm{\beta}$=$(3,0.25,0.65,0.2,-0.3,-0.2)$ (Model 1) and 
$\bm{\beta}$=$(3,-0.1,-0.4,0.8,-0.3,0.35)$ (Model 2), both with
the same variance $\sigma^2=1$. Each one was tested for two different
sample sizes, $n=50$ and $n=200$. 
Both latitude and longitude were generated by independent standard normal distributions without truncation, as an hypothetical surface without borders.

Each model was estimated under three different prior specifications
for the coefficients $\beta_{j}$, $j=0,\dots,5$ and $\sigma^2$ as follows. 
Prior 1: $\beta_{j}\sim N(0,100^2)$, $\sigma^2\sim\mbox{Gamma}(2,0.1)$,
Prior 2: $\beta_{j}\sim N(0,10^2)$, $\sigma^2\sim\mbox{Gamma}(2,0.1)$,
Prior 3: $\beta_{j}\sim N(0,10^2)$, $\sigma^2\sim\mbox{Gamma}(0.1,0.1)$,
where Prior 1 is more flat than the others and follows the
suggestion in \cite{Chung2013}, Prior 3 is more informative following
\cite{stanguide} warnings to avoid eventual computational errors and
Prior 2 is an intermediate one. The main results of this exercise are
summarized in Table \ref{sa_srm}. In this table, 0.0000 means smaller than 0.0001.

In Model 1 with $n=50$ the priors present bias of the same order for
most parameters except for $\sigma$, where Prior 3 is the best, and
$\beta_3$, where Prior 3 is the worst. The SMSEs are quite similar
across all prior specifications. When $n$ increases there is some
changes in bias order: $\beta_0$ has the best performance with Prior
3, however $\beta_2$, $\beta_4$ and $\beta_5$ show a one order decrease with Prior 2, then this is the best prior.
For Model 2 and $n=50$, we see Prior 2 again with better bias results
for $\beta_0$ and $\beta_4$, but Prior 3 is better for estimating
$\sigma$. With $n=200$ the bias results show an advantage of Prior 1
to estimate $\beta_2$ and $\beta_4$, as well as Prior 2 is better to
estimate $\beta_3$ and $\beta_5$, and likewise Prior 3 for
$\beta_0$. However, the SMSEs were very similar, so that the differences between priors were not so relevant in this last set. Overall, Prior 2 presents the best results. 

\begin{center} [ Table \ref{sa_srm} around here ] \end{center}

Our last exercise concerns to GARCH(1,1) models where
we generate artificial time series with Normal errors and two
different sets of parameters: $\alpha_{0,a} = 0.5$,
$\alpha_{1,a}=0.11$, $\beta_{1,a}= 0.88$ and $\alpha_{0,b} = 1$,
$\alpha_{1,b}=0.77$, $\beta_{1,b}= 0.22$. However we propose to
estimate both cases with Normal and Student $t$ error terms, even
though all series were built using Normal errors. Replacing a Normal
by a Student $t$ is a commonly used strategy to control overdispersed data.
The prior distributions were assigned as follows.
Prior 1: $\alpha_{0}\sim\mbox{Gamma}(0.1,0.1)$, $\alpha_{1}\sim\mbox{Beta}(2,2)$, $\beta_{1}\sim\mbox{Beta}(2,2)$, 
Prior 2: $\alpha_{0}\sim\mbox{Gamma}(0.1,0.1)$, $\alpha_{1}\sim\mbox{Beta}(2,3)$, $\beta_{1}\sim\mbox{Beta}(3,2)$ and
Prior 3: $\alpha_{0}\sim\mbox{Gamma}(0.5,0.5)$, $\alpha_{1}\sim\mbox{Beta}(2,3)$, $\beta_{1}\sim\mbox{Beta}(3,2)$,


The results for the GARCH(1,1) with Normal errors are presented in
Table \ref{sa_garchn}. We notice that Prior 3 attained the best
results for the parameter set 1, but Prior 1 was better in set 2. This
outcome happens because Prior 1 is perhaps too informative about
$\alpha_0$ and even a value of $T$ as large as 900 was not enough for the
model to learn from data. However, the different priors assigned to
$\alpha_1$ and $\beta_1$ do not imply in any drastic output change. 
Table \ref{sa_garcht} summarizes the output from a GARCH(1,1) model estimated
with Student $t$ errors. 
From this table we notice that Prior 1 returned the best results for the
parameter set 2, but in set 1 the three priors share similar performances. 
We now look at both tables in tandem to compare Student $t$ and Normal errors.
The Normal GARCH presented better results than Student $t$ for the
parameter set 2, but they were similar in set 1, so that there is no
need of a robust model in this case (the series were generated with normal errors).

\begin{center} [ Table \ref{sa_garchn} around here ] \end{center}

\begin{center} [ Table \ref{sa_garcht} around here ] \end{center}

\subsection{Influence Identification}

To evaluate the normalizing Bregman divergence as a useful tool to
identify point influence we proceed with three simulation sets, each
refering to a different model. We use the same models presented in the
previous subsection. Within each model we created four scenarios: I
without any kind of perturbation, II where there is one perturbed
observation, III where there are two influential points and IV with
three influential points. The point contamination in time series and
spatial models follows the scheme proposed by \cite{hao} and
\cite{cho}, i.e., $y^*_t=y_t + 5\sigma_y$, where $\sigma_y$ is the
standard deviation of the observed sample $\bm{y}$. For the logistic
regression however we need a different approach to
contaminate data. In this case we simply exchange the output, i.e. if
$y_t$ is to be contaminated and $y_t=1$ then we set $y^*_t=0$,
otherwise if $y_t=0$ we set $y^*_t=1$.

The results for the case influence diagnostic using normalizing
Bregman divergence in logistic regression are shown in Table
\ref{ci_lr}. For this table, the true parameter values are
$\beta_0=-3$, $\beta_1=-0.7$, $\beta_2=0.3$ and the prior
distributions are $\beta_{j}\sim\text{Cauchy}(0,2.5)$,
~$j\in\{0,1,2\}$. Also, the perturbation schemes are: I no
perturbation, II observation 64 has an additional noise, III
observations 44 and 64 present perturbation and IV observations 19, 44
and 64 have an extra noise. The table then shows the estimated (mean
and standard deviation) divergences for the three observations, 19, 44
and 64.

We first notice that in the no perturbation scenario the estimated
divergences are mostly as expected on average, i.e. $1/100$ and
$1/300$ for $n=100$ and $n=300$ respectively. On the other hand, when
the output is perturbed the average divergence was between 0.028 and
0.031 for $n=100$ and between 0.008 and 0.009 for $n=300$.
Finally, there is a correlation between mean and standard deviation in
the sense that a small value of one corresponds to low estimation of the other.

\begin{center} [ Table \ref{ci_lr} around here ] \end{center}

The results are even more emphatic in spatial regression models as
shown in Table \ref{ci_srm}. In this table, the true parameter values
are $\phi=0.75$, $\sigma^2=1$, $\beta_0=1.3$, $\beta_1=-0.7$ and we
chose Prior 2. Also, the influence scenarios are: I without
perturbation, II observation 19 has an additional noise, III
observations 15 and 19 present perturbation and IV where 3, 15 and 19
have an extra noise. 

For $n=50$ and scenario I the normalizing Bregman divergence has mean
around 0.020 in the three observed points, which corresponds to the expected
$1/50$. In scenario II, the estimates for observations 3 and 15 fall to 0.011 and 0.013,
because observation 19 was perturbed and its divergence estimate rises
to 0.423. In scenario III the estimate for observation 3 falls even more, because both 15 and 19 were
perturbed and both have a 0.283 estimate for the divergence. Finally, when the three
observations were perturbed they share the impact between 0.210 and
0.214 estimates.
For $n=200$ and scenario I, we have again results precisely as
expected, i.e. 0.005 compared to $1/200$. Furthermore, scenarios II, III and IV are
quite similar, the mean values are slightly smaller, but this is
expected for a larger sample.  

\begin{center} [ Table \ref{ci_srm} around here ] \end{center}

The effect is still clear in time series with moderate sample sizes,
as shown in Table \ref{ci_garch}. This table shows results for a
GARCH(1,1) model with normal errors and true parameter values
$\alpha_0=2$, $\alpha_1=0.2$ and $\beta_1=0.6$. Influence scenarios
for $T=100$: I without perturbation, II observation 64 has an
additional noise, III observations 44 and 64 present perturbation and
IV where 19, 44 and 64 have an extra noise. Influence scenarios to
$T=500$: I without perturbation, II observation 464 has an additional
noise, III observations 344 and 464 present perturbation and IV where
119, 344 and 464 have an extra noise. 

For $T=100$ and scenario I the normalizing
Bregman divergence has mean equal to 0.009 in the three observed
points, which is slightly bellow the expected $1/100$. In scenario
II the estimated divergence for observations 19 and 44 fall to 0.007,
because observation 64 was perturbed and its estimated divergence
rises to 0.247, which might not seem a large
value but it is more than 20 times the expected value $1/100$. In III
the estimate for observation 19 falls even more, because both 44 and 64 were
perturbed and have estimated divergences 0.188 and 0.192. Finally, when all
three observations were perturbed they share the impact with similar
estimated divergences. For $T=500$ and scenario I, we have again
results around 0.002, i.e. $1/500$. Furthermore, scenarios II,
III and IV show quite similar results, the estimated values being
slightly smaller, but this is expected for a larger sample.  

\begin{center} [ Table \ref{ci_garch} around here ] \end{center}

\section{Empirical Analysis}\label{sec:empirical}

In this section, we investigate influential points in real data sets
using the normalizing Bregman divergence. In all examples, convergence
assessment of the Markov chains were based on visual inspection of
trace plots, autocorrelation plots and the $\hat{R}$ statistic since
we ran two chains for each case. All results indicated that the chains
reached stationarity relatively fast.

\subsection{Binary Regression for Alpine Birds}

A study about an endemic coastal alpine bird was conducted in Vancouver
Island (Southwest coast of British Columbia, Canada) for
more than a decade and the results were published in
\cite{Jackson15}. The presence or absence of birds in a grid of space
was registered over the years together with other environmental
characteristics as covariates.
The authors proposed an interpretation of data by a Random Forest
model. Here we extend their model to a Bayesian framework and consider
a binary logistic regression with other covariates.

For illustration, we selected the following covariates:
elevation (1000 meters) and average temperature in summer months (in
Celcius degrees) and the model also includes an intercept.
We then ran a HMC with two chains, each one with 4,000 iterations where the
first half was used as burn-in. This setup was used to fit models with
probit and logit link functions. The
normalizing Bregman divergences estimated for each observation are
displayed in Figure \ref{pombas_breg}. 
From this figure, it is hard to judge what is a high value of divergence,
because there are more than one thousand observations in the sample.  
However, we can easily conclude that the model with logit link
performs better because the highest values of logit are lower than the highest values
of probit. This is to say that the most influential points in the logit model are
not so influent as in the probit model.  
 
\begin{center} [ Figure \ref{pombas_breg} around here ] \end{center}

\subsection{Spatial models for rainfall in South of Brazil}

Here we illustrate a spatial regression approach to analyze the data
on precipitation levels in Paran\'a State, Brazil. This data is freely
available in the {\tt geoR} package and was previously analyzed by for
example \cite{Diggle2002} and \cite{Gaetan}. 
The data refers to average precipitation levels over 33 years of
observation during the period May-June (dry-season) in 143 recording
stations throughout the state.
The original variable was summarized in 100 millimeters of
precipitation per station. We changed it to 10,000 millimeters of rain
per station, which seems more intuitive once the average local
precipitation is around 1,000 millimeter per year and the observation
time was larger than 10 years. 
 
We the fitted three models for the average rainfall. These are the full SRM
presented in Equation (\ref{srm_model}), the same model but without the
squared components $x^2$ and $y^2$, and the smallest one without
squared components and neither the interaction term $x*y$, which we
refer to as full, middle and small models respectively.
All the models were fitted from two HMC chains, each one with 20,000
iterations where the first half was used as burn-in. 

We estimated the normalizing Bregman divergence for each recording
station and compared the results in the same way as in the previous example.
We conclude that the small model was the best one in the sense that
it shows the smallest peaks. For example, the maximum values for each
model were 0.072, 0.068 and 0.039 respectively, which are already
pretty high relative to the expected $1/143=0,0069$.

We chose to display only the results for this one
best model in Figure \ref{parana_map_breg} from which we can see that the
largest values of normalizing Bregman (largest circles) are scattered around the map,
notwithstanding the rainy region is concentrated in the southwest.
 
\begin{center} [ Figure \ref{parana_map_breg} around here ] \end{center}

\subsection{GARCH for Bitcoin exchange to US Dollar}

The cryptocurrencies were born in the new millennium dawn as an
alternative as governments and banks. As such, they changed the rules
of financial market and they appreciated very fast, although high
fluctuation and sharp falls are common. In particular, the Bitcoin is
likely the most famous cryptocurrency and shows the largest volume of crypt transactions.

We illustrate the statistical analysis with one year of daily data on
the log-returns of Bitcoin (BTC) exchange to U.S. Dollar (USD) from August 5, 2017 to August 5, 2018.
This data was produced from the CoinDesk price page (see
http://www.coindesk.com/price/).
We then fitted GARCH(1,1) models with normal and Student $t$ errors
for the log-return of BTC to USD exchange.
We ran the HMC with two chains, each one with 4,000 iterations where
the first half was used as burn-in.
The estimation of main parameter of Normal model are:  
$\alpha_0$ has zero mean and SD,
$\alpha_1$ is 0.15 (0.06) and
$\beta_1$ is 0.07 (0.08), on the other hand the Student $t$ is
$\alpha_0$ with zero mean and SD too,
$\alpha_1$ is 0.11 (0.05) and
$\beta_1$ is 0.06 (0.07).

We estimated the normalizing Bregman divergence for each day, the
result could be see in Figure \ref{btc}. Here is not so trivial to
choose between the Normal or the Student $t$ model. Because there is
no clear dominance of one or another, even though the Student $t$
presents the highest value of divergence, both form a mixed cloud of
values very closed. However the highest points are quite sure very
influent observations, because they represent more than 20 times the
expected mean of 1/364. 
Consequently, it is not a surprise that the observed high divergences
in January correspond to what the Consumer News and Business Channel
(CNBC) called a Bitcoin nightmare. A time of new regulations in South
Korea as well as a Facebook currency policy change, which implied a
devaluation. 
    
\begin{center} [ Figure \ref{btc} around here ] \end{center}

\section{Discussion}\label{sec:discussion}

In this article we explored the possibilities of using the functional
Bregman divergence as a useful generalization of the Kullback-Leiber
divergence to identify influential observations in Bayesian models,
for both dependent and independent data. 
Kullback-Leiber is easier to estimate, but overall it is
difficult to infer if a point represents an influential point or
not. So we propose to normalizing the Bregman divergence based on the
order maintenance of the functional. 
It has two intuitive
advantages: firstly, it belongs to range between zero and one which
is easier to interpret, secondly we can evaluate its intensity 
according to the sample size. 
In particular, the normalizing Bregman divergence for the  Kullback-Leiber case
avoids the need for heavy computations.

As we saw in the simulation study, the expected average of a normalizing Bregman
divergence to any observation without perturbation is approximately
$1/n$. Of course that number of influential points is a relevant issue
to evaluate the value of a normalizing Bregman divergence. 
The simulation study embraced three different fields of statistic:
GLM, spatial models and time series, with similar conclusions in all of them.
Besides, the empirical analysis explored three
scientific fields: Ecology, Climatology and Finance. 
Finally, in all cases the Hamiltonian Monte Carlo was an efficient 
and fast way to obtain samples from the posterior distribution of parameters

\section*{Acknowledgments}

Ricardo Ehlers received support from S\~ao Paulo Research Foundation
(FAPESP) - Brazil, under grant number 2016/21137-2.

\clearpage

\begin{table}[h]
	\centering
	\caption{Bias and square root of mean square error for
          parameter estimates in the logistic regression.}
\begin{tabular}{cc rcc rcc rc}
\hline \multicolumn{10}{c}{Model 1} \\	
	&	        & \multicolumn{2}{c}{Prior 1} & &
\multicolumn{2}{c}{Prior 2} & & \multicolumn{2}{c}{Prior 3} \\ \cline{3-4} \cline{6-7} \cline{9-10}
$n$ &     Parameter & Bias & SMSE && Bias & SMSE && Bias & SMSE \\
100 & $\beta_{0}$ &  0.125 & 0.105 & &  0.126 & 0.112  & &  0.096 & 0.092\\ 
    & $\beta_{1}$ & -0.084 & 0.113 & & -0.079 & 0.108  & & -0.060 & 0.095\\ 
    & $\beta_{2}$ &  0.040 & 0.088 & &  0.045 & 0.089  & &  0.095 & 0.083\\ 
300 & $\beta_{0}$ &  0.031 & 0.024 & &  0.031 & 0.026  & &  0.030 & 0.025 \\ 
    & $\beta_{1}$ & -0.021 & 0.027 & & -0.026 & 0.027  & & -0.015 & 0.024 \\ 
    & $\beta_{2}$ &  0.009 & 0.022 & &  0.010 & 0.024  & &  0.009 & 0.021 \\ 
\hline \multicolumn{10}{c}{Model 2} \\	
&	        & \multicolumn{2}{c}{Prior 1} & &
\multicolumn{2}{c}{Prior 2} & & \multicolumn{2}{c}{Prior 3} \\ \cline{3-4} \cline{6-7} \cline{9-10}    
$n$ &     Parameter & Bias & SMSE && Bias & SMSE && Bias & SMSE \\
100 & $\beta_{0}$ & -0.155 & 0.151 & & -0.171 & 0.172   & & -0.117 &  0.146  \\ 
    & $\beta_{1}$ &  0.145 & 0.164 & &  0.130 & 0.171   & &  0.094 &  0.131 \\ 
    & $\beta_{2}$ & -0.055 & 0.106 & & -0.061 & 0.109   & & -0.037 &  -0.037  \\ 
300 & $\beta_{0}$ & -0.056 & 0.038 & & -0.045 & 0.039 & & -0.026 & 0.036 \\ 
    & $\beta_{1}$ &  0.051 & 0.040 & &  0.033 & 0.041 & &  0.022 & 0.036 \\ 
    & $\beta_{2}$ & -0.011 & 0.027 & & -0.014 & 0.030 & & -0.008 & 0.028 \\ \hline
\end{tabular}
	\label{sa_lr}
\end{table}

\clearpage

\begin{table}
  \centering
  \caption{Bias and square root of mean square error for
          parameter estimates in the spatial regression model.}
  \begin{tabular}{cc rcc rcc rc}
\hline \multicolumn{10}{c}{Model 1} \\		
 & & \multicolumn{2}{c}{Prior 1} & & \multicolumn{2}{c}{Prior 2} & & \multicolumn{2}{c}{Prior 3} \\ \cline{3-4} \cline{6-7} \cline{9-10} 
$n$  & Parameter     & Bias   & SMSE  & & Bias    & SMSE  & & Bias   & SMSE \\
{50} & $\beta_0$ & -0.0026 & 0.0478 &  & -0.0126 & 0.0465 &  & -0.0024 & 0.0426 \\ 
& $\beta_1$ & 0.0024 & 0.0278 &  & 0.0013 & 0.0248 &  & -0.0031 & 0.0254 \\ 
& $\beta_2$ & 0.0016 & 0.0272 &  & 0.0007 & 0.0242 &  & -0.0063 & 0.0256 \\ 
& $\beta_3$ & 0.0035 & 0.0326 &  & 0.0014 & 0.0306 &  & -0.0133 & 0.0319 \\ 
& $\beta_4$ & 0.0018 & 0.0178 &  & 0.0055 & 0.0160 &  & 0.0018 & 0.0157 \\ 
& $\beta_5$ & -0.0038 & 0.0166 &  & 0.0020 & 0.0153 &  & -0.0023 & 0.0159 \\ 
& $\sigma$ & 0.0324 & 0.0137 &  & 0.0329 & 0.0135 &  & 0.0065 & 0.0117 \\ \hline
{200} & $\beta_0$ & -0.0022 & 0.0112 &  & -0.0050 & 0.0105 &  & -0.0006 & 0.0102 \\ 
& $\beta_1$ & 0.0021 & 0.0054 &  & -0.0029 & 0.0056 &  & 0.0033 & 0.0054 \\ 
& $\beta_2$ & -0.0056 & 0.0053 &  & -0.0003 & 0.0054 &  & 0.0029 & 0.0051 \\ 
& $\beta_3$ & 0.0067 & 0.0055 &  & -0.0034 & 0.0053 &  & -0.0017 & 0.0058 \\ 
& $\beta_4$ & -0.0008 & 0.0029 &  & 0.0000 & 0.0029 &  & 0.0001 & 0.0030 \\ 
& $\beta_5$ & -0.0026 & 0.0031 &  & 0.0001 & 0.0029 &  & 0.0017 & 0.0029 \\ 
& $\sigma$ & 0.0064 & 0.0026 &  & 0.0093 & 0.0027 &  & 0.0049 & 0.0025 \\ \hline
\multicolumn{10}{c}{Model 2} \\		
 & & \multicolumn{2}{c}{Prior 1} & & \multicolumn{2}{c}{Prior 2} & & \multicolumn{2}{c}{Prior 3} \\ \cline{3-4} \cline{6-7} \cline{9-10} 
{50} & $\beta_0$ & -0.0041 & 0.0474 &  & 0.0008 & 0.0459 &  & 0.0082 & 0.0488 \\ 
& $\beta_1$ & -0.0031 & 0.0258 &  & -0.0074 & 0.0265 &  & -0.0088 & 0.0271 \\ 
& $\beta_2$ & -0.0050 & 0.0283 &  & -0.0028 & 0.0266 &  & 0.0056 & 0.0269 \\ 
& $\beta_3$ & -0.0086 & 0.0313 &  & -0.0092 & 0.0335 &  & 0.0010 & 0.0307 \\ 
& $\beta_4$ & -0.0041 & 0.0169 &  & 0.0008 & 0.0167 &  & -0.0070 & 0.0167 \\ 
& $\beta_5$ & 0.0018 & 0.0161 &  & -0.0013 & 0.0173 &  & -0.0012 & 0.0169 \\ 
& $\sigma$ & 0.0344 & 0.0132 &  & 0.0266 & 0.0121 &  & 0.0099 & 0.0117 \\ \hline
{200} & $\beta_0$ & -0.0024 & 0.0101 &  & 0.0017 & 0.0108 &  & 0.0007 & 0.0105 \\ 
& $\beta_1$ & 0.0023 & 0.0053 &  & -0.0019 & 0.0052 &  & 0.0055 & 0.0048 \\ 
& $\beta_2$ & -0.0001 & 0.0054 &  & 0.0009 & 0.0054 &  & 0.0010 & 0.0053 \\ 
& $\beta_3$ & 0.0019 & 0.0058 &  & 0.0004 & 0.0057 &  & -0.0012 & 0.0060 \\ 
& $\beta_4$ & 0.0000 & 0.0027 &  & 0.0018 & 0.0028 &  & -0.0014 & 0.0030 \\ 
& $\beta_5$ & 0.0015 & 0.0027 &  & -0.0001 & 0.0028 &  & 0.0007 & 0.0029 \\ 
& $\sigma$ & 0.0053 & 0.0026 &  & 0.0084 & 0.0025 &  & 0.0037 & 0.0025 \\ \hline
\end{tabular}
\label{sa_srm}
\end{table}

\clearpage

\begin{table}
  \centering
  \caption{Bias and square root of mean square error for
    parameter estimates in the GARCH(1,1) model with normal errors.}
  \begin{tabular}{cc rcc rcc rc}
    \hline \multicolumn{10}{c}{Model 1} \\		
    & & \multicolumn{2}{c}{Prior 1} & & \multicolumn{2}{c}{Prior 2} & & \multicolumn{2}{c}{Prior 3} \\ \cline{3-4} \cline{6-7} \cline{9-10} 
    $T$ & Parameter     & Bias  & SMSE  & & Bias    & SMSE  & & Bias   & SMSE \\
500	& $\alpha_0$ & 	0.716	&	0.725	& &	0.339	&
0.317	& &	0.178	&	0.195	\\
    & $\alpha_1$ & 	-0.111	&	0.013	& &	-0.119	&	0.015	& &	-0.121	&	0.015	\\
    & $\beta_1$ & 	-0.174	&	0.037	& &	-0.095	&	0.015	& &	-0.065	&	0.009	\\
900	& $\alpha_0$ & 	0.773	&	0.837	& &	0.399	&	0.371	& &	0.215	&	0.235	\\
    & $\alpha_1$ & 	-0.131	&	0.018	& &	-0.136	&	0.019	& &	-0.136	&	0.019	\\
    & $\beta_1$ & 	-0.167	&	0.036	& &	-0.095	&	0.016	& &	-0.063	&	0.011	\\ 
	\hline \multicolumn{10}{c}{Model 2} \\		
	& & \multicolumn{2}{c}{Prior 1} & & \multicolumn{2}{c}{Prior 2} & & \multicolumn{2}{c}{Prior 3} \\ \cline{3-4} \cline{6-7} \cline{9-10} 
$T$   & Parameter     & Bias   & SMSE  & & Bias    & SMSE  & & Bias   & SMSE \\
500	& $\alpha_0$ & 	0.031	&	0.021	& &	-0.049	&	0.023	& &	-0.056	&	0.022	\\
    & $\alpha_1$ & 	-0.300	&	0.095	& &	-0.319	&	0.106	& &	-0.321	&	0.107	\\
    & $\beta_1$ & 	0.009	&	0.005	& &	0.067	&	0.010	& &	0.071	&	0.011	\\
900	& $\alpha_0$ & 	0.093	&	0.024	& &	0.033	&	0.015	& &	0.036	&	0.015	\\
    & $\alpha_1$ & 	-0.297	&	0.091	& &	-0.315	&	0.102	& &	-0.315	&	0.102	\\
    & $\beta_1$ & 	-0.029	&	0.005	& &	0.014	&	0.004	& &	0.012	&	0.004	\\ 
\hline
  \end{tabular}
  \label{sa_garchn}
\end{table}

\clearpage

\begin{table}
\centering
\caption{Bias and square root of mean square error for
  parameter estimates in the GARCH(1,1) model with Student $t$ errors.}
\begin{tabular}{cc rcc rcc rc}
\hline \multicolumn{10}{c}{Model 1} \\		
    & & \multicolumn{2}{c}{Prior 1} & & \multicolumn{2}{c}{Prior 2} & & \multicolumn{2}{c}{Prior 3} \\ \cline{3-4} \cline{6-7} \cline{9-10} 
$T$ & Parameter     & Bias   & SMSE  & & Bias   & SMSE  & & Bias   & SMSE \\		
500	& $\alpha_0$& 0.133  & 0.136 & & -0.191	&	0.132	& &	-0.262	&	0.147	\\
	& $\alpha_1$&-0.121  & 0.015 & & -0.128	&	0.017	& &	-0.129	&	0.017	\\
	& $\beta_1$ &-0.174  & 0.036 & & -0.093	&	0.013	& &	-0.078	&	0.010	\\
900	& $\alpha_0$& 0.192  & 0.152 & & -0.147	&	0.116	& &	-0.224	&	0.126	\\
	& $\alpha_1$&-0.139  & 0.020 & & -0.143	&	0.021	& &	-0.144	&	0.021	\\
	& $\beta_1$ &-0.171  & 0.035 & & -0.090	&	0.013	& &	-0.072	&	0.009	\\
\hline \multicolumn{10}{c}{Model 2} \\		
    & & \multicolumn{2}{c}{Prior 1} & & \multicolumn{2}{c}{Prior 2} & & \multicolumn{2}{c}{Prior 3} \\ \cline{3-4} \cline{6-7} \cline{9-10}	
$T$ & Parameter     & Bias   & SMSE  & & Bias    & SMSE  & & Bias   & SMSE \\
500	& $\alpha_0$& -0.218	&	0.060	& &	-0.291	&	0.096	& &	-0.276	&	0.087	\\
	& $\alpha_1$& -0.361	&	0.133	& &	-0.383	&	0.149	& &	-0.385	&	0.151	\\
	& $\beta_1$ &  0.033	&	0.006	& &	0.099	&	0.015	& &	0.093	&	0.013	\\
900	& $\alpha_0$& -0.154	&	0.032	& &	-0.219	&	0.057	& &	-0.214	&	0.054	\\
	& $\alpha_1$& -0.368	&	0.137	& &	-0.378	&	0.145	& &	-0.376	&	0.143	\\
	& $\beta_1$ & -0.009	&	0.003	& &	0.040	&	0.006	& &	0.036	&	0.005	\\ 
\hline
  \end{tabular}
  \label{sa_garcht}
\end{table}

\clearpage
 
\begin{table}
  \centering
  \caption{Case influence diagnostic in logistic regression by normalizing Bregman divergence.}
  \begin{tabular}{cccc cccc}
\hline  \multicolumn{8}{c}{$n=100$}\\ 
Perturbation & Obs & Mean & SD  & Perturbation & Obs  & Mean & SD \\ \hline
I	& 19    & 0.010    & 0.016   & II & 19    & 0.009     & 0.012   \\ 
I	& 44    & 0.009    & 0.016   & II & 44    & 0.009     & 0.009   \\ 
I	& 64    & 0.010    & 0.017   & II & 64    & 0.031     & 0.035   \\ 
III	& 19    & 0.009    & 0.012   & IV & 19    & 0.028     & 0.025   \\ 
III	& 44    & 0.028    & 0.027   & IV & 44    & 0.030     & 0.029   \\ 
III	& 64    & 0.030    & 0.034   & IV & 64    & 0.031     & 0.030   \\ \hline 
\multicolumn{8}{c}{$n=300$}\\ 
Perturbation & Obs & Mean  & SD & Perturbation & Obs & Mean  & SD \\ \hline
I	& 19    & 0.003    & 0.005   & II & 19    & 0.003     & 0.005   \\ 
I	& 44    & 0.003    & 0.004   & II & 44    & 0.003     & 0.003   \\ 
I	& 64    & 0.003    & 0.004   & II & 64    & 0.008     & 0.010   \\ 
III	& 19    & 0.003    & 0.004   & IV & 19    & 0.009     & 0.010   \\ 
III	& 44    & 0.008    & 0.010   & IV & 44    & 0.008     & 0.009   \\ 
III	& 64    & 0.008    & 0.008   & IV & 64    & 0.008     & 0.010   \\ \hline 
  \end{tabular}
  \label{ci_lr}
\end{table}

\clearpage

\begin{table}
  \centering
  \caption{Case influence diagnostic for spatial regression model
    using normalizing Bregman divergence.}
  \begin{tabular}{cccc cccc}
\hline
\multicolumn{8}{c}{$n=50$}\\ 
Perturbation & Obs & Mean & SD & Perturbation & Obs & Mean & SD \\ \hline
I & 3 & 0.020 & 0.027 & II & 3 & 0.011 & 0.016 \\ 
I & 15 & 0.018 & 0.027 & II & 15 & 0.013 & 0.019 \\ 
I & 19 & 0.018 & 0.026 & II & 19 & 0.423 & 0.111 \\ 
III & 3 & 0.009 & 0.013 & IV & 3 & 0.211 & 0.074 \\ 
III & 15 & 0.283 & 0.089 & IV & 15 & 0.214 & 0.075 \\ 
III & 19 & 0.283 & 0.088 & IV & 19 & 0.210 & 0.078 \\ \hline
\multicolumn{8}{c}{$n=200$}\\ 
Perturbation & Obs & Mean & SD & Perturbation & Obs & Mean & SD \\ \hline
I & 3 & 0.005 & 0.007 & II & 3 & 0.004 & 0.006 \\ 
I & 15 & 0.005 & 0.007 & II & 15 & 0.004 & 0.006 \\ 
I & 19 & 0.004 & 0.007 & II & 19 & 0.183 & 0.047 \\ 
III & 3 & 0.003 & 0.004 & IV & 3 & 0.133 & 0.035 \\ 
III & 15 & 0.154 & 0.040 & IV & 15 & 0.130 & 0.036 \\ 
III & 19 & 0.153 & 0.040 & IV & 19 & 0.133 & 0.038 \\ \hline
\end{tabular}
  \label{ci_srm}
\end{table}

\clearpage

\begin{table}
  \centering
  \caption{Case influence diagnostic for GARCH using normalizing Bregman divergence.}
  \begin{tabular}{cccc cccc}
\hline 
\multicolumn{8}{c}{$T=100$}\\
Perturbation & Obs & Mean & SD & Perturbation & Obs & Mean & SD \\ \hline
I & 19 & 0.009 & 0.014 & II & 19 & 0.007 & 0.010 \\ 
I & 44 & 0.009 & 0.014 & II & 44 & 0.007 & 0.011 \\ 
I & 64 & 0.009 & 0.014 & II & 64 & 0.247 & 0.072 \\ 
III & 19 & 0.005 & 0.006 & IV & 19 & 0.162 & 0.051 \\ 
III & 44 & 0.188 & 0.059 & IV & 44 & 0.156 & 0.048 \\ 
III & 64 & 0.192 & 0.060 & IV & 64 & 0.156 & 0.049 \\ \hline 
\multicolumn{8}{c}{$T=500$}\\
Perturbation & Obs & Mean & SD & Perturbation & Obs & Mean & SD \\ \hline
I & 119 & 0.001 & 0.002 & II & 119 & 0.001 & 0.002 \\ 
I & 344 & 0.002 & 0.004 & II & 344 & 0.001 & 0.003 \\ 
I & 464 & 0.001 & 0.003 & II & 464 & 0.063 & 0.029 \\ 
III & 119 & 0.001 & 0.002 & IV & 119 & 0.054 & 0.023 \\ 
III & 344 & 0.058 & 0.026 & IV & 344 & 0.054 & 0.025 \\ 
III & 464 & 0.059 & 0.028 & IV & 464 & 0.052 & 0.021 \\ \hline
\end{tabular}
\label{ci_garch}
\end{table}

\clearpage

\begin{figure}[h]\centering
  \includegraphics[width=7cm,angle=270]{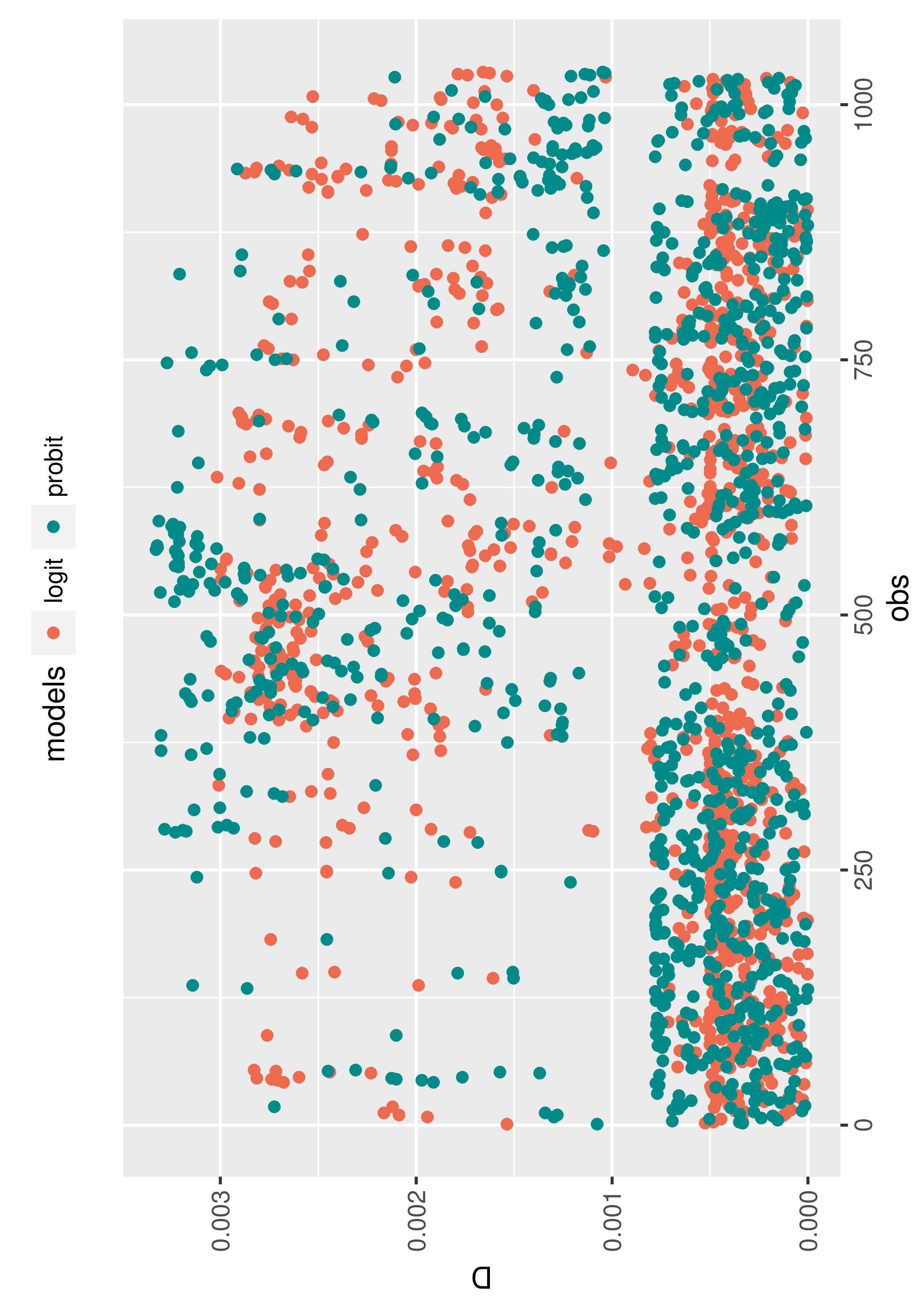}
  \caption{Normalizing Bregman divergence for an endemic coastal alpine bird (1990-2013).}
  \label{pombas_breg}
\end{figure}

\clearpage

\begin{figure}[h]\centering
  \includegraphics[width=12cm]{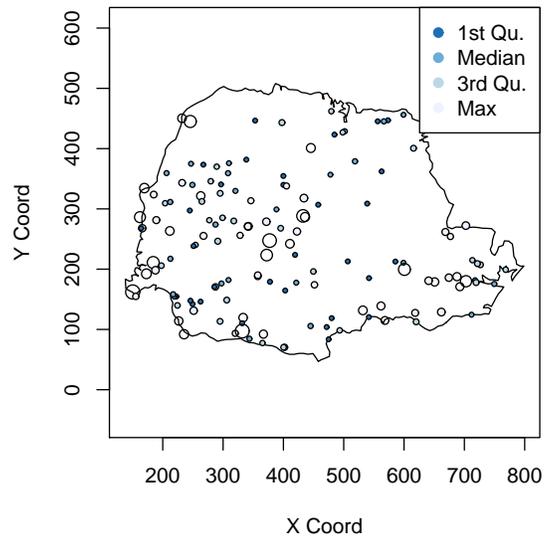}
  \caption{Normalizing Bregman divergence for average precipitation in Paran\'a, Brazil.}
  \label{parana_map_breg}
\end{figure}

\clearpage

\begin{figure}[h]\centering
  \includegraphics[width=8cm,angle=270]{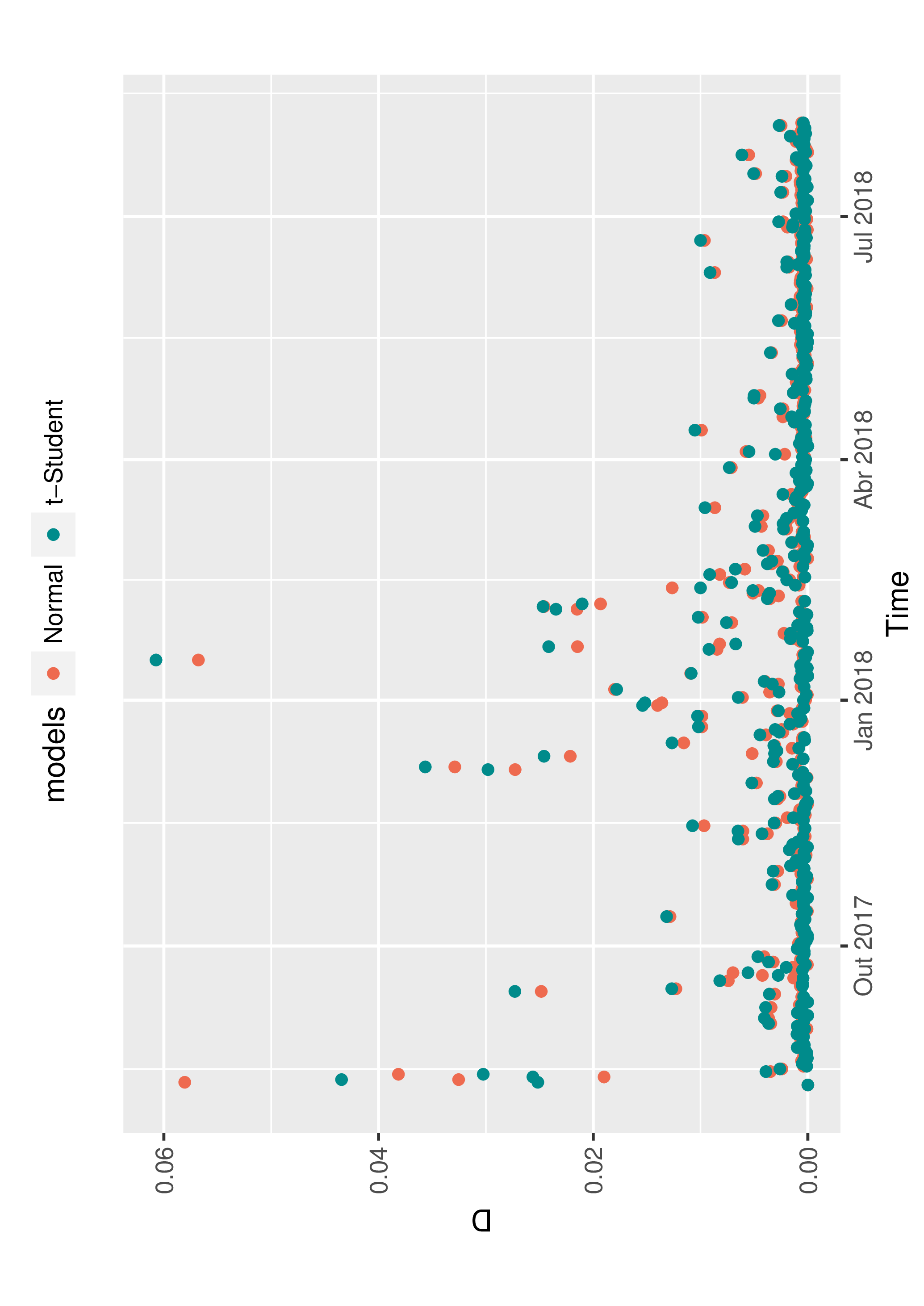}
  \caption{Normalizing Bregman divergence for Bitcoin exchange to US Dollar (August 5 2017 - August 5 2018).}
  \label{btc}
\end{figure}

\end{document}